\documentclass[conference,letterpaper]{IEEEtran}

\usepackage[a4paper, total={210mm, 297mm}, top=0.75in, bottom=1in, left = 0.62in, right=0.62in]{geometry}
\usepackage{graphicx}
\usepackage{color}
\usepackage{pgf, tikz, pgfplots}
\usepackage{siunitx}
\pgfplotsset{compat=1.17} 
\usepackage{amssymb} 
\usepackage{bm}
\usepackage{acronym}
\usepackage{tikz}
\usepackage{mathrsfs}
\usepackage{enumerate}
\usepackage{todonotes}
\usepackage[utf8]{inputenc} 
\usepackage[T1]{fontenc}
\usepackage{url}
\usepackage{ifthen}
\usepackage{cite}
\usepackage[cmex10]{amsmath} %
\usepackage{braket}
\usepackage{stmaryrd}

\usetikzlibrary{arrows,calc,fit,matrix,positioning,spy}
\tikzset{myblock/.style={rectangle, draw, thin, minimum width=0.6cm, minimum height=0.6cm},font=\footnotesize,align=center}%

\usepackage{cite}
\usepackage{amsmath,amssymb,amsfonts}

\usepackage{dsfont}
\usepackage{algorithm}
\usepackage[noend]{algorithmic}
\usepackage{graphicx}
\usepackage{textcomp}
\usepackage{xcolor}
\def\BibTeX{{\rm B\kern-.05em{\sc i\kern-.025em b}\kern-.08em
    T\kern-.1667em\lower.7ex\hbox{E}\kern-.125emX}}

    \usepackage{amsfonts}     
\usepackage{graphicx}
\usepackage{color}
\usepackage{pgf, tikz, pgfplots}
\usepackage{siunitx}
\pgfplotsset{compat=1.17} 
\usepackage{amssymb} 
\usepackage{bm}
\usepackage{bbm}
\usepackage{amsthm}
\usepackage{acronym}
\usepackage{tikz}
\usepackage{mathrsfs}
\usepackage{enumerate}
\usetikzlibrary{arrows,calc,fit,matrix,positioning,shapes,shadows,trees,mindmap,tikzmark,arrows.meta,angles,quotes,babel,spy}

\usepackage{booktabs}

\definecolor{KITblack}{rgb}{0.25,0.25,0.25}
\definecolor{KITbrown}{rgb}{0.65,0.51,0.18}

\definecolor{kit-green100}{rgb}{0,.59,.51}
\definecolor{kit-green100}{rgb}{0,.59,.51}
\definecolor{kit-green70}{rgb}{.3,.71,.65}
\definecolor{kit-green50}{rgb}{.50,.79,.75}
\definecolor{kit-green30}{rgb}{.69,.87,.85}
\definecolor{kit-green15}{rgb}{.85,.93,.93}
\definecolor{KITgreen}{rgb}{0,.59,.51}

\definecolor{KITpalegreen}{RGB}{130,190,60}
\colorlet{kit-maigreen100}{KITpalegreen}
\colorlet{kit-maigreen70}{KITpalegreen!70}
\colorlet{kit-maigreen50}{KITpalegreen!50}
\colorlet{kit-maigreen30}{KITpalegreen!30}
\colorlet{kit-maigreen15}{KITpalegreen!15}

\definecolor{KITblue}{rgb}{.27,.39,.66}
\definecolor{kit-blue100}{rgb}{.27,.39,.67}
\definecolor{kit-blue70}{rgb}{.49,.57,.76}
\definecolor{kit-blue50}{rgb}{.64,.69,.83}
\definecolor{kit-blue30}{rgb}{.78,.82,.9}
\definecolor{kit-blue15}{rgb}{.89,.91,.95}

\definecolor{KITyellow}{rgb}{.98,.89,0}
\definecolor{kit-yellow100}{cmyk}{0,.05,1,0}
\definecolor{kit-yellow70}{cmyk}{0,.035,.7,0}
\definecolor{kit-yellow50}{cmyk}{0,.025,.5,0}
\definecolor{kit-yellow30}{cmyk}{0,.015,.3,0}
\definecolor{kit-yellow15}{cmyk}{0,.0075,.15,0}

\definecolor{KITorange}{rgb}{.87,.60,.10}
\definecolor{kit-orange100}{cmyk}{0,.45,1,0}
\definecolor{kit-orange70}{cmyk}{0,.315,.7,0}
\definecolor{kit-orange50}{cmyk}{0,.225,.5,0}
\definecolor{kit-orange30}{cmyk}{0,.135,.3,0}
\definecolor{kit-orange15}{cmyk}{0,.0675,.15,0}

\definecolor{KITred}{rgb}{.63,.13,.13}
\definecolor{kit-red100}{cmyk}{.25,1,1,0}
\definecolor{kit-red70}{cmyk}{.175,.7,.7,0}
\definecolor{kit-red50}{cmyk}{.125,.5,.5,0}
\definecolor{kit-red30}{cmyk}{.075,.3,.3,0}
\definecolor{kit-red15}{cmyk}{.0375,.15,.15,0}

\definecolor{KITpurple}{RGB}{160,0,120}
\colorlet{kit-purple100}{KITpurple}
\colorlet{kit-purple70}{KITpurple!70}
\colorlet{kit-purple50}{KITpurple!50}
\colorlet{kit-purple30}{KITpurple!30}
\colorlet{kit-purple15}{KITpurple!15}

\definecolor{KITcyanblue}{RGB}{80,170,230}
\colorlet{kit-cyanblue100}{KITcyanblue}
\colorlet{kit-cyanblue70}{KITcyanblue!70}
\colorlet{kit-cyanblue50}{KITcyanblue!50}
\colorlet{kit-cyanblue30}{KITcyanblue!30}
\colorlet{kit-cyanblue15}{KITcyanblue!15}

\newcommand\blfootnote[1]{%
    \begingroup
    \renewcommand\thefootnote{}\footnote{#1}%
    \addtocounter{footnote}{-1}%
    \endgroup
}

\usepackage{amsmath,amssymb,amsfonts}
\usepackage{algorithmic}
\usepackage{graphicx}
\usepackage{textcomp}
\usepackage{xcolor}
\usepackage[normalem]{ulem}

\usepackage{stmaryrd}
\usepackage{physics}
\newtheorem{definition}{Definition}

\newtheorem{lemma}{Lemma}
\newtheorem{theorem}{Theorem}

\newtheorem{remark}{Remark}
\usepackage[capitalize]{cleveref}
\Crefname{equation}{Eq.}{Eqs.}
\Crefname{definition}{Def.}{Defs.}
\Crefname{theorem}{Thm.}{Thms.}
\Crefname{lemma}{Lem.}{Lems.}
\Crefname{proposition}{Prop.}{Props.}
\Crefname{construction}{Construction}{Constructions}
\Crefname{corollary}{Corollary}{Corollaries}
\Crefname{remark}{Remark}{Remarks}
\Crefname{example}{Example}{Examples}

\usepackage[capitalize]{cleveref}
\Crefname{equation}{Eq.}{Eqs.}
\Crefname{definition}{Def.}{Defs.}
\Crefname{theorem}{Thm.}{Thms.}
\Crefname{lemma}{Lem.}{Lems.}
\Crefname{construction}{Construction}{Constructions}
\newif\ifcomment

\newif\ifhighlightShortLongDiffs
\newif\ifversionShort

\newcommand{\defeq}{:=}

\newcommand{\eqdef}{=:}

\newcommand{\myspan}[1]{\ensuremath{\left\langle #1\right\rangle}}

\newcommand{\imunit}{\iota}

\newcommand{\wt}{\ensuremath{\mathrm{wt}}}

\def\rz{\ifmmode{\mathds{R}}%
    \else{\hbox{$\mathds{R}$}}\fi} 
\def\nz{\ifmmode{\mathbb{N}}%
    \else{\hbox{$\mathds{N}$}}\fi} 
\def\gz{\ifmmode{\mathds{Z}}%
   \else{\hbox{$\mathds{Z}$}}\fi} 
\def\cz{\ifmmode{\mathds{C}}
    \else{\hbox{$\mathds{C}$}}\fi}%
\def\qz{\ifmmode{\mathds{Q}}%
    \else{\hbox{$\mathds{Q}$}}\fi}%
\def\K{\ifmmode{\mathds{K}}%
    \else{\hbox{$\mathds{K}$}}\fi}%

\newcommand{\0}{\boldsymbol{0}}

\newcommand{\be}{\boldsymbol{e}}
\newcommand{\bg}{\boldsymbol{g}}

\newcommand{\bt}{\boldsymbol{t}}

\newcommand{\bx}{\boldsymbol{x}}

\newcommand{\bz}{\boldsymbol{z}}

\newcommand{\bH}{\boldsymbol{H}}

\newcommand{\bL}{\boldsymbol{L}}
\newcommand{\bM}{\boldsymbol{M}}

\newcommand{\bP}{\boldsymbol{P}}

\newcommand{\bS}{\boldsymbol{S}}

\newcommand{\F}{\ensuremath{\mathbb{F}}}

\newcommand{\cB}{\ensuremath{\mathcal{B}}}

\newcommand{\cD}{\ensuremath{\mathcal{D}}}

\newcommand{\cG}{\ensuremath{\mathcal{G}}}

\newcommand{\cL}{\ensuremath{\mathcal{L}}}

\newcommand{\cP}{\ensuremath{\mathcal{P}}}

\newcommand{\cS}{\ensuremath{\mathcal{S}}}

\newcommand{\bbC}{\ensuremath{\mathbb{C}}}

\newcommand{\bbE}{\ensuremath{\mathbb{E}}}

\newcommand{\bbS}{\ensuremath{\mathbb{S}}}

\usepackage{physics}

\newcommand{\qcode}[1]{\ensuremath{\left\llbracket#1\right\rrbracket}}

\commenttrue

 \versionShortfalse %

\highlightShortLongDiffsfalse

\newlength{\ferfigheight}      %
\begin{document}
\title{
Affine Subcode Ensemble Decoding for \\ Degeneracy-Aware Quantum Error Correction
}

\author{%
\IEEEauthorblockN{
Leo Wursthorn$^\dagger$,
Jonathan Mandelbaum$^\dagger$,
Sisi Miao$^\dagger$,
Hedongliang Liu$^\dagger$,\\
Holger Jäkel$^\dagger$,
Stergios Koutsioumpas$^\star$,
Laurent Schmalen$^\dagger$
}
\IEEEauthorblockA{
$^\dagger$\,Communications Engineering Lab (CEL), Karlsruhe Institute of Technology (KIT), 76187 Karlsruhe, Germany\\
}
\IEEEauthorblockA{
$^\star$\,School of Informatics, University of Edinburgh, Edinburgh EH8 9AB, United Kingdom\\
Email: \texttt{jonathan.mandelbaum@kit.edu}
}
}%

\maketitle

\begin{abstract}
Quantum low-density parity-check codes are promising candidates for low-overhead fault-tolerant quantum computing, but \emph{degeneracy} is known to impair the convergence of belief-propagation (BP) decoding of these codes.
In this work, we show that appending linearly independent rows to a check matrix of a stabilizer code 
can reduce the search space for a valid degenerate solution.
Motivated by this, we extend the recently proposed affine subcode ensemble decoding technique from the classical to the quantum setting.
Moreover, we employ overcomplete matrices for each decoding path.
Monte-Carlo simulations on toric and generalized bicycle codes demonstrate improved convergence and reduced logical error rate.
\end{abstract}

\blfootnote{
   This work was funded by the Deutsche Forschungsgemeinschaft (DFG, German
Research Foundation) –  Project 567557088 and also received funding from the German Federal Ministry of
Research, Technology and Space (BMFTR) within the project
Open6GHub+ (grant agreement 16KIS2405). Parts of this work have received funding from the European
Research Council (ERC) under the European
Union’s Horizon 2020 research and innovation programme (grant agreement No. 101001899)
}

\section{Introduction}

Quantum devices are noisy, making quantum error correction (QEC)
essential for scalable fault-tolerant computation \cite{Shor1995, Gottesman1997,wilde2013quantum}.
Quantum low-density parity-check (QLDPC) codes are a promising class of QEC codes, as their sparse checks make them attractive for low-overhead architectures
\cite{MacKay2004,TillichZemor2014,Panteleev2021,bravyi_high-threshold_2024}.
The sparsity of classical LDPC codes enables efficient graph-based message-passing decoders, in particular belief propagation (BP) \cite{Gallager1962}, which is attractive for real-time QEC.

A general class of QEC codes that encode logical information into the joint eigenspace of commuting Pauli operators is given by stabilizer codes. In the stabilizer framework, decoding is inherently \emph{degenerate}, which means that multiple Pauli errors may share the same syndrome and the same logical effect. Exploiting degeneracy is essential for quantum decoding~\cite{shor1996quantumerrorcorrectingcodesneed, iyer2015hardness, PoulinChung2008, GuidedDecimation2024, MSLS25, Raveendran2021}. 
However, degeneracy can also hinder the application of BP decoding to QEC 
by causing the decoder to oscillate rather than to converge\cite{mueller2025improvedbeliefpropagationsufficient}.
Additionally, non-convergence is exacerbated since QLDPC codes contain many trapping sets.
Thus, several methods have been
proposed to improve the convergence and accuracy of BP-based decoding for QLDPC codes.
Post-processing approaches use BP soft information to construct a syndrome-consistent estimate~\cite{Panteleev2021,roffe_decoding_2020,hillmann_localized_2025, yin_symbreak_2024}, but can increase worst-case latency, which is problematic for real-time QEC. Other approaches modify BP directly, e.g., through scaled messages~\cite{LK21} or neural BP~\cite{liu_neural_2019,MSLS25}.
Recently, ensemble decoding (ED) has gained attention as a strategy for improving robustness by exploiting diversity across multiple decoding paths, both in the classical~\cite{Geiselhart2022,mandelbaum2024endomorphisms, Hehn2010, MMJS24, MJS25,AED_RMcodes} and quantum settings~\cite{ koutsioumpas_automorphism_2025, koutsioumpas_colour_2025, maan2026decodingcorrelatederrorsquantum, gong2024lowlatencyiterativedecodingqldpc, mueller2025improvedbeliefpropagationsufficient, ye2025beamsearchdecoderquantum}.

In this work, we provide a mathematical characterization of degeneracy by introducing \emph{degeneracy sets}.
Building on those, we show that appending additional $\Delta$ linearly independent rows to a check matrix of a stabilizer code and setting the additional syndrome component to any binary vector of length $\Delta$ can split each degeneracy set into equally-sized disjoint subsets.
We then extend \emph{affine subcode ensemble decoding} (aSCED)~\cite{MBtB26} that has recently been introduced in classical coding theory
to the QEC setting, where it naturally exploits this structure and introduces additional diversity.
Finally, we present simulation results for toric codes~\cite{MWPM2002,Kitaev2003} and generalized bicycle (GB) codes~\cite{Panteleev2021}, demonstrating that aSCED significantly improves the convergence behavior of BP and achieves outstanding decoding performance.

\textbf{Notation:} 
Let $\imunit=\sqrt{-1}$ and \mbox{$[n]:=\{1,2,\dots,n\}$}.
Let $q$ be a prime power and $\F_q$ be a Galois field of size $q$.
Let boldface letters denote vectors and matrices, e.g., $\bm{a}$ and $\bm{A}$, where $a_i$ is the $i$th component of $\bm{a}$ and $A_{j,i}$ is the element in the $j$th row and $i$th column of $\bm A$; $\bm A^\top$ denotes the matrix transpose.

\section{Preliminaries}
Consider the Pauli operators in \mbox{$\cP=\{I, X, Y,Z\}\subset\bbC^{2\times 2}$} that are Hermitian, unitary,
and with the commutation properties 
\mbox{$ZX=-XZ=\imunit Y$}. Together with \mbox{$N\defeq \{\pm 1, \pm \imunit \}$} and the matrix multiplication over $\bbC^{2\times 2}$, they form the single-qubit Pauli group \mbox{$\cG_1\defeq \{cP:c\in N, P\in \cP\}$}.
We extend $\cG_1$ to \mbox{$\mathcal{G}_n:=\{c\bP=cP_1\otimes \cdots \otimes P_n: c\in N, P_i\in\cP\}$}, the $n$-qubit Pauli group, via the Kronecker product $\otimes$.
The weight of a Pauli operator \mbox{$\bP\in \cG_n$} is \mbox{$\mathrm{wt}\left(\bm{P}\right) \defeq \left|\left\{i\in[n]: {P}_i \neq I\right\}\right|$}.

Let $\cS$ be a commutative subgroup of $\cG_n$ of order $n-k$, generated by a set of generators ${\mathcal{B}_\mathcal{S}:=\{\bm{S}_1,\ldots, \bm{S}_{n-k}\}}$.
An $\qcode{n,k}\subset\bbC^{2^n}$ \emph{stabilizer code} is defined based on $\cS$ by
\[
Q(\cS)\defeq \left\{\ket{\psi}\in\bbC^{2^n}: \bm{S}\ket{\psi}=(+1)\ket{\psi},\ \forall \bm{S}\in\mathcal{S}\right\}.
\]
Then, $Q(\cS)$ possesses a set ${\mathcal{B}_\mathcal{L}\defeq\{\bL_i\}_{i=1}^{2k}\subset\cG_n\setminus\mathcal{S}}$ of $2k$ independent \emph{logical operators}~\cite{Gottesman1997} such that ${\bL_i\ket{\psi}\in Q(\cS)}$, ${\forall \ket{\psi}\in Q(\cS)}$, but ${\exists \ket{\psi}\in Q(\cS): \bL_i\ket{\psi}\neq\ket{\psi}}$. Let $\cL\subset \cG_n$ be the subgroup generated by $\cB_{\cL}$.
The distance of the code $Q(\cS)$ is defined as the minimum weight of a logical operator, 
i.e., 
$d\defeq \min_{\bL\in\cL}\wt(\bL)$.
If the minimum distance $d$ is known, we write $\llbracket n,k,d \rrbracket$.

Now, let ${N^{\otimes n}\defeq\{\pm 1, \pm \imunit\}I^{\otimes n}}$ and $\overline{\mathcal{G}}_n:= \cG_n/N^{\otimes n}$ be the $n$-qubit Pauli group modulo the global phase, which can be ignored in QEC \cite{Gottesman1997}.
Note that $\overline{\mathcal{G}}_n$ admits a vector space structure over $\mathbb{F}_2$\cite{Calderbank1998}. 
Using the definitions ${\bm X(\bm{x}):=X^{x_1}\otimes\cdots\otimes X^{x_n}}$ and ${\bm Z(\bm{z}):=Z^{z_1}\otimes\cdots\otimes Z^{z_n}}$, %
we can associate $n$-qubit Pauli operators with 
binary vectors $\bx, \bz\in\F_2^n$. Then, ${\mathcal{R}\defeq \{X(\bm{x})\cdot Z(\bm{z}):\bx, \bz\in\F_2^n\}}$ consists of the representatives of the cosets of $\overline{\mathcal{G}}_n$ such that ${\{N^{\otimes n}\bm{R}:\bm{R}\in\mathcal{R}\}=\overline{\mathcal G}_n}$.
We use the isomorphism~\cite{Calderbank1998} ${\phi:(\overline{\mathcal G}_n,\cdot)\to(\mathbb{F}_2^{2n},+)}$
with \mbox{${\phi(N\bm{R})\defeq\phi(\bm{R})\defeq(\bm{x}|\bm{z)}}$} for \mbox{$\bm{R}=\bm X(\bm{x}) \bm Z(\bm{z})\in\mathcal{R}$}, and denote its inverse by $\phi^{-1}$. Observe that $\phi$ is well-defined on equivalence classes, i.e., it does not depend on the choice of the representative.
Furthermore, for $n=1$, ${\bm X(\bm{x})=X^{x_1}}$ and ${\bm Z(\bm{z})=Z^{z_1}}$. Then $\phi$ maps single-qubit Pauli operators in $\cP$ to vectors in $\F_2^2$.  
We extend $\phi$ to subsets \mbox{$\mathcal{A}\subseteq\overline{\mathcal G}_n$} by \mbox{$\phi(\mathcal{A})\defeq \{\phi(\bm{A}): \bm{A}\in\mathcal{A}\}$}. Additionally, we equip $\mathbb{F}_2^{2n}$ with the \emph{symplectic inner product} %
\[
\langle\bm{u},\bm{v}\rangle\defeq \sum_{i=1}^n u_{x,i}\cdot v_{z,i} +\sum_{i=1}^n v_{x,i}\cdot u_{z,i}\in \mathbb{F}_2
\]
for \mbox{$\bm{u}=(\bm{u}_x|\bm{u}_z)\in \mathbb{F}_2^{2n}$} and \mbox{$\bm{v}=(\bm{v}_x|\bm{v}_z)\in \mathbb{F}_2^{2n}$}. It can be shown that \mbox{$\langle\bm{u},\bm{v}\rangle=0$} iff $\phi^{-1}(\bm{u})$ and $\phi^{-1}(\bm{v})$ commute\cite{Calderbank1996}. %
A stabilizer code can be characterized by a check matrix
\begin{align}\label{eq:stabilizer-check-matrix}
\bm{H}=
\begin{pmatrix}
    (\phi(\bm{S}_{1}))^\mathsf{T}\,\, \ldots\,\,    (\phi(\bm{S}_{n-k}))^\top
\end{pmatrix}^\top\in \mathbb{F}_2^{(n-k)\times 2n}.
\end{align}
A check matrix \mbox{$\bm{H}\in \mathbb{F}_2^{m\times 2n}$} is called \textit{overcomplete} iff \mbox{$m>\mathrm{rank}(\bm{H})=n-k$}.

An important subclass of stabilizer codes are \textit{Calderbank--Shor--Steane} (CSS) codes \cite{Calderbank1996, Steane1996} with check matrix
\begin{flalign*}
    \bm{H}=
    \begin{pmatrix}
    \bm{H}_X&\bm{0}_{m_X\times n} \\
    \bm{0}_{m_Z\times n}&\bm{H}_Z
    \end{pmatrix}
    \in \mathbb{F}_2^{m\times 2n},
\end{flalign*} 
where the component parity-check matrices (PCMs) ${\bm{H}_X\in\mathbb{F}_2^{m_X\times n}}$ and $\bm{H}_Z\in\mathbb{F}_2^{m_Z\times n}$ are PCMs of two classical codes that satisfy the \textit{symplectic criterion} \mbox{$\bm{H}_X\bm{H}_Z^\top=\bm{0}$}.

\section{Degeneracy Sets and Decoding}
\subsection{Degeneracy Sets}
\label{sec:degeneracy-sets}
In this work, we assume a depolarizing channel in which the $i$th physical qubit $\ket{\psi_i}$ is modified independently by a random Pauli operator \mbox{$P_i\in\mathcal{P}$}.
Each error $X$, $Z$ and $Y$ occurs with probability $\tfrac{p}{3}$, where $p$ denotes the physical error probability.

The binary representation of Pauli operators allows for the following mathematical characterization of \emph{degeneracy}.
By construction, ${\phi(\mathcal{B}_\mathcal{S})}$ constitutes a basis of an ${(n-k)}$-dimensional subspace $\mathbb{S}\subseteq\mathbb{F}_2^{2n}$.
Similarly, $\phi(\mathcal{B}_\mathcal{L})$ is the basis of a $2k$-dimensional subspace $\mathbb{L}\subseteq\mathbb{F}_2^{2n}$ and all vectors in $\phi(\mathcal{B}_\mathcal{L})$ and $\phi(\mathcal{B}_\mathcal{S})$ are linearly independent.

Now let $\{\bm{\varepsilon}_1, \dots, \bm{\varepsilon}_{n-k}\}\subset \F_2^{2n}$ be chosen such that it complements $\phi(\mathcal{B}_\mathcal{S}\cup\mathcal{B}_\mathcal{L})$ to a basis of $\mathbb{F}_2^{2n}$. 
Then, this set forms the basis of an $(n-k)$-dimensional subspace ${\mathbb{E}\subseteq\mathbb{F}_2^{2n}}$ and 
$\mathbb{F}_2^{2n}=\mathbb{E}\oplus \mathbb{S}\oplus \mathbb{L}$ is a direct sum of subspaces.  
Thus, the binary representation $\bm{e}\in \mathbb{F}_2^{2n}$ of an arbitrary Pauli error can be uniquely decomposed as
\begin{align}
\label{eq:error-decomposition}
\bm{e}=\bm{\varepsilon}+\bm{\sigma}+\bm{\lambda},\quad \bm{\varepsilon}\in \mathbb{E}, \bm{\sigma}\in \mathbb{S}, \bm{\lambda}\in \mathbb{L}.
\end{align}
We characterize the error components as follows: 
\begin{itemize}
    \item $\bm{\varepsilon}$ or $\phi^{-1}(\bm{\varepsilon})$ is the \emph{detectable} error component since it anti-commutes with at least one stabilizer,
    \item $\bm{\sigma}$ or $\phi^{-1}(\bm{\sigma})$ is the \emph{trivially undetectable} component, 
    \item $\bm{\lambda}$ or $\phi^{-1}(\bm{\lambda})$ is the \emph{non-trivial undetectable} error component, also known as a \emph{logical error} in literature.
\end{itemize}
Based on this decomposition, we define a degeneracy set:
\begin{definition}
    A \emph{degeneracy set} $\mathcal{D}_{\bm{\lambda},\bm{\varepsilon}}$ is defined as a coset of $\mathbb{S}$ in $\mathbb{F}_2^{2n}/\mathbb{S}$ resulting in equivalence classes
indexed by $\bm{\lambda}\in \mathbb{L}$ and $\bm{\varepsilon}\in \mathbb{E}$, i.e.,
\[
\mathcal{D}_{\bm{\lambda},\bm{\varepsilon}}\defeq \bm{\lambda}+ \bm{\varepsilon}+\mathbb{S} \defeq \{\bm{\lambda}+ \bm{\varepsilon}+\bm{\sigma}: \bm{\sigma} \in \mathbb{S}\}
.
\]
\end{definition}
For any $\be_1, \be_2\in \mathcal{D}_{\bm{\lambda},\bm{\varepsilon}}$, there exists 
$\bS\in\cS$ such that $\phi^{-1}(\be_1)=\bS \phi^{-1}(\be_2)$, i.e., a degeneracy set summarizes all errors that only differ by a stabilizer.

\subsection{Quantum Decoding Problem}
For an arbitrary error $\bm{e}=\bm{\varepsilon}+\bm{\sigma}+\bm{\lambda}
\in \mathbb{F}_2^{2n}$, the \emph{syndrome} indicates, for each stabilizer in ${\mathcal{B}_\mathcal{S}}$, whether the Pauli operator $\phi^{-1}(\be)$ commutes or anti-commutes with it~\cite{Roffe2019}. 
The extracted syndrome is denoted by $\bz\in\F_2^{n-k}$, where  ${z_j\defeq\langle\be,\phi(\bm{S}_j)\rangle}$ for $\bS_j\in\mathcal{B}_\mathcal{S}$.
Note that the syndrome is determined by the detectable error component $\bm{\varepsilon}$, since for both other components %
${\myspan{\bm{\lambda}, \phi(\bS)}=\myspan{\bm{\sigma}, \phi(\bS)}=0,\, \forall \bS\in\mathcal{S}}$.

A decoder is solving a \emph{syndrome-decoding problem}: given $\bH$ as in \eqref{eq:stabilizer-check-matrix} and given the extracted syndrome $\bz\in \mathbb{F}_2^{n-k}$, find
an estimated error $\hat{\bm{e}}\in\mathbb{F}_2^{2n}$ such that
\begin{align}
    \label{eq:syndrome_match}
\langle \hat{\bm{e}},\bm H_{j, :} \rangle =\langle \hat{\bm{e}},\phi(\bm{S}_j) \rangle =  z_j \quad \forall j\in[n-k],
\end{align} 
where $\bH_{j, :}$ is the $j$th row of $\bH$.

Now, assume that $\be\in \mathcal{D}_{\bm{\lambda},\bm{\varepsilon}}$.  
We distinguish \emph{degeneracy-aware} decoding and \emph{non-degeneracy-aware} decoding. Degeneracy-aware decoding aims at identifying an error estimate $\hat{\be}$ such that 
$\hat{\bm{e}}\in\mathcal{D}_{\bm{\lambda},\bm{\varepsilon}}$,
i.e., the estimated error $\hat{\bm{e}}$ is \emph{logically equivalent} to the occurred error $\be$.
In contrast, non-degeneracy-aware decoding aims at finding $\hat{\bm{e}}=\bm{e}$.
The decoding result can be characterized into four distinct cases:
\begin{itemize}
    \item Type I success: $\hat{\bm{e}}=\bm{e}$, i.e., the error is estimated exactly.
    \item Type II success: $\hat{\bm{e}}\neq\bm{e}$, but is logically equivalent to $\be$.
    \item Type I failure (flagged): no $\hat{\bm{e}}$ consistent with $\bm{z}$ is found.
    \item Type II failure (unflagged): an $\hat{\bm{e}}$ consistent with $\bm{z}$ is found, but $\hat{\bm{e}}\notin\mathcal{D}_{\bm{\lambda},\bm{\varepsilon}}$.
\end{itemize}
The Type II failure will result in a Pauli correction causing a logical error, as \mbox{$\hat{\bm{e}}\in \mathcal{D}_{\bm{\lambda}',\bm{\varepsilon}}$} with \mbox{$\bm{\lambda}'\neq \bm{\lambda}$} induces \mbox{$\phi^{-1}(\hat{\be})\cdot \phi^{-1}(\be)=\phi^{-1}(\bm{\lambda}'-\bm{\lambda})$} that acts non-trivially on the orginal code state.

\subsection{Belief Propagation Decoding for QEC}

BP decoding is an iterative message passing algorithm operating on the \textit{Tanner graph} induced by the check matrix \mbox{$\bm H\in\mathbb{F}_2^{m\times 2n}$} of the stabilizer code. The Tanner graph is a bipartite graph consisting of two disjoint sets of nodes: 
the variable nodes (VNs) \mbox{$\{\mathsf{v}_i\}_{i\in[n]}$}, associated with the physical qubits, and the check nodes (CNs) \mbox{$\{\mathsf{c}_j\}_{j\in[m]}$}, associated with the rows of $\bm H$, referred to as checks. 
A VN $\mathsf{v}_i$ is connected to a CN $\mathsf{c}_j$ via an edge iff \mbox{$h_{j,i}:=(H_{j,i}\vee H_{j,n+i})=1$}, where $\vee$ denotes the logical OR.
We denote the neighboring sets of VNs and CNs as \mbox{$\mathcal{N}(i):=\{j\in[m]: h_{j,i}=1\}$} and \mbox{$\mathcal{M}(j):=\{i\in[n]: h_{j,i}=1\}$}, respectively.

In this work, quaternary belief propagation (BP4) decoding is used, which implements syndrome-based BP decoding over $\mathbb{F}_4$, providing a unified decoding framework that accounts for all three Pauli errors jointly~\cite{LK20, LK21}. In contrast to binary BP~\cite{PoulinChung2008}, which decodes for $X$- and $Z$-errors separately, BP4 captures the correlations from simultaneous $X$- and $Z$-errors and therefore yields better decoding performance~\cite{MSLS25}.

Specifically, we use log-domain refined BP4 decoding~\cite{LK21}.
Typically, the algorithm is introduced in $\mathbb{F}_4^n$; to keep minimal notation, we describe the algorithm in $\mathbb{F}_2^{2n}$.
Each VN $\mathsf{v}_i$, \mbox{$i\in[n]$}, is assigned a %
log-likelihood ratio (LLR) vector \mbox{$\bm{L}_i =\left(L_i^{(X)},\, L_i^{(Z)},\, L_i^{(Y)}\right)\in\mathbb{R}^3$}, with one component for each Pauli error \mbox{$\zeta\in\mathcal{P}\setminus\{I\}$}. 
Let ${\bm{e}_i\defeq(e_i \mid e_{n+i})}$ denote the binary representation of a Pauli error on the $i$th qubit.
Based on a given physical error probability $p_0$ of the depolarizing channel, the a-priori LLRs are defined as 
\begin{flalign*}
    \tilde{L}_i^{(\zeta)} = \ln\left(\frac{P(\bm e_i=\phi(I))}{P(\bm e_i=\phi(\zeta))}\right) = \ln\left(\frac{1 - p_0}{\frac{p_0}{3}}\right), \quad\zeta\in\mathcal{P}\setminus\{I\},
\end{flalign*}
and used to initialize the variable-to-check messages $\bm{L}_{i\rightarrow j}$. 
To reduce complexity without performance loss in the following CN update, a \textit{belief quantization operator} \mbox{$\lambda:\mathbb{R}^3\to\mathbb{R}$} maps each LLR vector $\bm{L}_{i\rightarrow j}$ onto a scalar LLR by
\begin{flalign*}
    \lambda(\bm{L}_{i\rightarrow j}) 
    = \ln\left(\frac{P(\langle \bm e_i,\phi(\eta)\rangle = 0)}{P(\langle \bm e_i,\phi(\eta)\rangle = 1)}\right) 
    = \ln%
    \left(\frac{1+\mathrm{e}^{-L_{i\rightarrow j}^{(\eta)}}}{\underset{\zeta\in\mathcal{P}\setminus\{I,\eta\}}{\sum} \!\!\!\!\!\! \mathrm{e}^{-L_{i\rightarrow j}^{(\zeta)}}}%
    \right)\!,
\end{flalign*}
where \mbox{$\eta=\phi^{-1}((H_{j,i}\mid H_{j,n+i}))$}.
For each CN $\mathsf{c}_j$, \mbox{$j\in[m]$}, the outgoing messages are then computed as
\begin{flalign*}
    L_{i\leftarrow j} = (-1)^{z_j} \cdot 2\tanh^{-1}\!\left(\prod_{i'\in\mathcal{M}(j)\backslash\{i\}}\!\!\!\!\!\!\tanh\left(\frac{\lambda(\bm{L}_{i\rightarrow j})}{2}\right)\right)\!.
\end{flalign*}
The outgoing messages of each VN $\mathsf{v}_i$ are computed for all Pauli errors individually as
\begin{flalign*}
    L_{i\rightarrow j}^{(\zeta)} = \tilde{L}_i^{(\zeta)} + \!\!\!\!\!\!\!\!\!\!\!\!\!\!\!\!\!\!
    \sum_{\substack{j' \in \mathcal{N}(i) \backslash \{j\} \\ 
    \langle \phi(\zeta), (H_{j',i}\mid H_{j',n+i}) \rangle = 1}} \!\!\!\!\!\!\!\!\!\!\!\!\!\!\!\!\!\!\!
    L_{i\leftarrow j'}=: L_i^{(\zeta)}\!-  L_{i\leftarrow j},\quad\zeta\in\mathcal{P}\setminus\{I\}.
\end{flalign*}
A hard decision on all a-posteriori LLRs $\bm{L}_i$, \mbox{$i\in[n]$}, yields the estimate $\hat{\bm{e}}$, with \mbox{$\hat{\bm{e}}_i=\phi(I)$} iff \mbox{$L_i^{(\zeta)}>0,\,\forall\zeta\in\mathcal{P}\setminus\{I\}$}, and \mbox{$\hat{\bm e}_i=\phi\left(\arg\min_\zeta L_i^{(\zeta)}\right)$} otherwise. The iterative decoding process is stopped when \eqref{eq:syndrome_match} is satisfied or $I_\text{max}$ iterations are reached.

\section{Splitting of Degeneracy Sets}
\label{sec:splitters}
The core idea of the decoding approach introduced in the subsequent section is to append additional rows to the check matrix \mbox{$\bm{H}\in \mathbb{F}_2^{m\times 2n}$}, which are linearly independent of the rows of $\bm{H}$. 
Note that we cannot measure their syndrome since they act non-trivially on the encoded state, so we need to virtually fix additional syndrome bits corresponding to such rows.
In the following, we show that if such a linearly appended row is not a logical operator, it has a \emph{splitting} effect on all the degeneracy sets $\mathcal{D}_{\bm{\lambda},\bm{\varepsilon}}$, $\bm{\lambda}\in \mathbb{L}$, $\bm{\varepsilon}\in \mathbb{E}$. 
\begin{lemma}\label{lem:t-independent}
    Let 
    $\bt={\bm{\varepsilon}_\mathsf{t}+\bm{\sigma}_\mathsf{t}+\bm{\lambda}_\mathsf{t}}\in\F_2^{2n}$ be a decomposition with $\bm{\varepsilon}_\mathsf{t}\in \mathbb{E}\setminus\{\bm{0}\}$, $\bm{\sigma}_\mathsf{t}\in \mathbb{S}$, and $\bm{\lambda}_\mathsf{t}\in \mathbb{L}$. Then $\bt$ is linearly independent of the rows of $\bH$, which form a basis of $\bbS$. 
\end{lemma}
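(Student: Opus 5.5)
The argument is by contradiction, using the uniqueness of the decomposition \eqref{eq:error-decomposition} in the direct sum $\mathbb{F}_2^{2n}=\mathbb{E}\oplus\mathbb{S}\oplus\mathbb{L}$.

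Suppose $\bt$ were linearly dependent on the rows of $\bH$. By assumption, these rows span exactly $\bbS$, so we would have $\bt\in\bbS$, say $\bt=\bm{\sigma}'$ for some $\bm{\sigma}'\in\bbS$. Then $\bt$ has two decompositions of the form \eqref{eq:error-decomposition}:
\[
\bt=\bm{\varepsilon}_\mathsf{t}+\bm{\sigma}_\mathsf{t}+\bm{\lambda}_\mathsf{t}
\quad\text{and}\quad
\bt=\bm{0}+\bm{\sigma}'+\bm{0},
\]
with $\bm{0}\in\mathbb{E}$, $\bm{\sigma}'\in\bbS$ and $\bm{0}\in\mathbb{L}$.

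The sum is direct because $\phi(\mathcal{B}_\mathcal{S})$, $\phi(\mathcal{B}_\mathcal{L})$ and $\{\bm{\varepsilon}_1,\dots,\bm{\varepsilon}_{n-k}\}$ together form a basis of $\F_2^{2n}$. Hence the two decompositions must agree component-wise, which forces $\bm{\varepsilon}_\mathsf{t}=\bm{0}$. This contradicts $\bm{\varepsilon}_\mathsf{t}\in\mathbb{E}\setminus\{\bm{0}\}$.

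Equivalently, one can write $\bm{\varepsilon}_\mathsf{t}=\bt-\bm{\sigma}_\mathsf{t}-\bm{\lambda}_\mathsf{t}$. If $\bt\in\bbS$, this places $\bm{\varepsilon}_\mathsf{t}$ in $\mathbb{E}\cap(\bbS\oplus\mathbb{L})=\{\bm{0}\}$, which is again impossible.

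No step is a real obstacle. The only thing to state explicitly is that "linearly independent of the rows of $\bH$" means $\bt\notin\operatorname{span}(\text{rows of }\bH)=\bbS$. The hypothesis that the rows form a basis of $\bbS$ ensures that the span is $\bbS$ and not something larger. The same reasoning also covers an overcomplete $\bH$, because its row span is still $\bbS$.
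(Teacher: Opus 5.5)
Your proof is correct and is essentially the paper's argument: both reduce linear independence to $\bm{t}\notin\mathbb{S}$ and obtain this from the directness of $\mathbb{E}\oplus\mathbb{S}\oplus\mathbb{L}$. Your version invokes uniqueness of the decomposition, i.e., $\mathbb{E}\cap(\mathbb{S}\oplus\mathbb{L})=\{\bm{0}\}$. That is the precise fact needed and slightly sharper than the paper's appeal to $\mathbb{S}\cap\mathbb{E}=\{\bm{0}\}$, which on its own would not rule out $\bm{\varepsilon}_\mathsf{t}+\bm{\lambda}_\mathsf{t}\in\mathbb{S}$.
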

\begin{IEEEproof}
    It follows from the definitions of the subspaces $\bbS$ and $\bbE$ in \cref{sec:degeneracy-sets} that $\bbS\cap \bbE=\{\0\}$. By choosing $\bm{\varepsilon}_{\sf t}\in\bbE\setminus\{\0\}$, we ensure $\bt\notin \bbS$ and hence $\bt$ is linearly independent of the rows of $\bH$.
\end{IEEEproof}
\begin{theorem}[Splitting the Degeneracy Sets]
\label{thm:t-splitting}
    Let $\bt$ be defined as in Lemma~1, i.e., with $\bm{\varepsilon}_\mathsf{t}\in \mathbb{E}\setminus\{\bm{0}\}$, and let \mbox{$z_{m+1}=g\in \mathbb{F}_2$} be the additional non-measured ($m+1$)th syndrome bit corresponding to $\bt$.
    Then, the additional row $\bm{t}$ \emph{splits} each degeneracy set $\mathcal{D}_{\bm{\lambda},\bm{\varepsilon}}$ , $\bm{\lambda}\in \mathbb{L}$, $\bm{\varepsilon}\in \mathbb{E}$, into two disjoint subsets 
    \begin{align*}
 \mathcal{D}^{(g)}_{\bm{\lambda},\bm{\varepsilon}}&=\{\bm{e}\in\mathcal{D}_{\bm{\lambda},\bm{\varepsilon}} : \langle \bm{e},\bm{t}\rangle=g\}, \, g\in \mathbb{F}_2,
    \end{align*}
    of equal size $|\mathcal{D}^{(0)}_{\bm{\lambda},\bm{\varepsilon}}|=|\mathcal{D}^{(1)}_{\bm{\lambda},\bm{\varepsilon}}|=\frac{|\mathcal{D}_{\bm{\lambda},\bm{\varepsilon}}|}{2}=2^{n-k-1}$.
\end{theorem}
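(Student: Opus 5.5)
The plan is to reduce the statement to a fact about a single linear functional on $\bbS$. Fix $\bm{\lambda}\in\mathbb{L}$ and $\bm{\varepsilon}\in\mathbb{E}$. Every $\be\in\cD_{\bm{\lambda},\bm{\varepsilon}}$ can be written uniquely as $\be=\bm{\lambda}+\bm{\varepsilon}+\bm{\sigma}$ with $\bm{\sigma}\in\bbS$. By bilinearity of the symplectic inner product,
\[
\langle \be,\bt\rangle=\langle \bm{\lambda}+\bm{\varepsilon},\bt\rangle+f(\bm{\sigma}),\qquad f(\bm{\sigma})\defeq\langle \bm{\sigma},\bt\rangle .
\]
The first summand is a constant $c\in\F_2$ on the whole coset. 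Hence $\cD^{(g)}_{\bm{\lambda},\bm{\varepsilon}}$ is in bijection with $\{\bm{\sigma}\in\bbS: f(\bm{\sigma})=g+c\}$. The two sets $\cD^{(0)}_{\bm{\lambda},\bm{\varepsilon}}$ and $\cD^{(1)}_{\bm{\lambda},\bm{\varepsilon}}$ are disjoint and cover $\cD_{\bm{\lambda},\bm{\varepsilon}}$, because $\langle\be,\bt\rangle$ takes exactly one value in $\F_2$. Since $f:\bbS\to\F_2$ is linear, it suffices to show $f\not\equiv 0$. Then $\ker f$ is a hyperplane of $\bbS$ with $2^{n-k-1}$ elements, and its complementary coset has the same size. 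This gives $|\cD^{(0)}_{\bm{\lambda},\bm{\varepsilon}}|=|\cD^{(1)}_{\bm{\lambda},\bm{\varepsilon}}|=2^{n-k-1}=|\cD_{\bm{\lambda},\bm{\varepsilon}}|/2$, using $|\bbS|=2^{n-k}$.

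The key step, and the main obstacle, is proving that some stabilizer anticommutes with $\bt$. Equivalently, $\bt\notin\bbS^{\perp}$, where $\bbS^\perp$ is the symplectic complement of $\bbS$. I would argue this in three steps.

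\emph{Dimension of $\bbS^\perp$.} The symplectic form on $\F_2^{2n}$ is nondegenerate, so $\dim\bbS^\perp=2n-(n-k)=n+k$.

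\emph{$\bbS\oplus\mathbb{L}\subseteq\bbS^\perp$.} We have $\bbS\subseteq\bbS^\perp$ because $\cS$ is commutative. We have $\mathbb{L}\subseteq\bbS^\perp$ because logical operators map $Q(\cS)$ to itself. By the standard argument, such operators lie in the normalizer of $\cS$, so they commute with every stabilizer. The paper already uses this fact when it states that $\langle\bm{\lambda},\phi(\bS)\rangle=0$ for all $\bS\in\cS$.

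\emph{Equality.} The sum $\bbS\oplus\mathbb{L}$ is direct and has dimension $(n-k)+2k=n+k$. Since it is contained in $\bbS^\perp$, which has the same dimension, we get $\bbS^\perp=\bbS\oplus\mathbb{L}$.

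Now $\bt=\bm{\varepsilon}_\mathsf{t}+\bm{\sigma}_\mathsf{t}+\bm{\lambda}_\mathsf{t}$, and decompositions in $\F_2^{2n}=\mathbb{E}\oplus\bbS\oplus\mathbb{L}$ are unique. If $\bt$ were in $\bbS\oplus\mathbb{L}$, its $\mathbb{E}$-component would have to vanish. But $\bm{\varepsilon}_\mathsf{t}\neq\0$, so $\bt\notin\bbS\oplus\mathbb{L}=\bbS^\perp$. This is the same reasoning as in \cref{lem:t-independent}, with $\bbS\oplus\mathbb{L}$ in place of $\bbS$. Hence there is $\bm{\sigma}\in\bbS$ with $f(\bm{\sigma})=1$, so $f\not\equiv 0$.

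Combining this with the reduction in the first paragraph completes the argument. The sets $\cD^{(g)}_{\bm{\lambda},\bm{\varepsilon}}$, $g\in\F_2$, partition $\cD_{\bm{\lambda},\bm{\varepsilon}}$ into two disjoint subsets of size $2^{n-k-1}$ each, for every $\bm{\lambda}\in\mathbb{L}$ and $\bm{\varepsilon}\in\mathbb{E}$, and the result holds for both choices of the virtual syndrome bit $g$.
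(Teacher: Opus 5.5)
Your proof is correct, and its skeleton matches the paper's. Both proofs write $\langle\be,\bt\rangle$ as the constant $\langle\bm{\lambda}+\bm{\varepsilon},\bt\rangle$ plus $\langle\bm{\sigma},\bt\rangle$. Both then split $\bbS$ into the hyperplane orthogonal to $\bt$ and its complementary coset. The paper exhibits that hyperplane explicitly by a change of basis, adding $\bm{\sigma}_{n-k}$ to every generator that anticommutes with $\bt$. You instead take the kernel of the nonzero functional $f$; the two constructions are equivalent.

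The substantive difference is in the key step that some stabilizer anticommutes with $\bt$. The paper derives this from the linear independence in Lemma~\ref{lem:t-independent}, which on its own is insufficient. Any nonzero $\bt\in\mathbb{L}$ is linearly independent of the rows of $\bH$, yet commutes with every element of $\cS$. Your argument actually establishes the step: the dimension count gives $\bbS^\perp=\bbS\oplus\mathbb{L}$, and uniqueness of the decomposition with $\bm{\varepsilon}_\mathsf{t}\neq\0$ then gives $\bt\notin\bbS^\perp$.

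This also proves something the paper only asserts earlier: that every nonzero element of $\mathbb{E}$ is detectable. So your route supplies the justification the paper's argument implicitly relies on, at the cost of one short dimension argument.
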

\begin{IEEEproof}
Due to Lemma~\ref{lem:t-independent}, $\bt$ is linearly independent from $\phi(\mathcal{B}_\mathcal{S})$ and the Pauli operator $\phi^{-1}(\bm{t})$ anti-commutes with at least one stabilizer $\bS\in\cS$ since the stabilizer are generated by $\mathcal{B}_\mathcal{S}$. Thus, there exists at least one \mbox{$\bm{\sigma}\in \phi(\mathcal{B}_\mathcal{S})$} such that \mbox{$\langle\bm{t},\bm{\sigma}\rangle=1$}.
Without loss of generality, assume that $\langle\bm{t},\bm{\sigma}_{n-k}\rangle=1$, i.e., $\phi^{-1}(\bm{t})$ anti-commutes with the $(n-k)$th stabilizer $\bm{S}_{n-k}$.
We perform a change of basis for $\bbS$: 
\[
\tilde{\bm{\sigma}}_j:=
\begin{cases}
    \bm{\sigma}_j,&\text{if }\langle\bm{t},\bm{\sigma}_{j}\rangle=0,\\
    \bm{\sigma}_j+\bm{\sigma}_{n-k},&\text{if }\langle\bm{t},\bm{\sigma}_{j}\rangle=1,\\
\end{cases} \ j\in[n-k-1],
\]
and \mbox{$\tilde{\bm{\sigma}}_{n-k}={\bm{\sigma}}_{n-k}$}. 
Then \mbox{$\{\tilde{\bm{\sigma}}_j\}_{j=1}^{n-k}$} is another basis of $\mathbb{S}$ such that 
\mbox{$\langle \tilde{\bm{\sigma}}_j,\bm{t}\rangle=0$}, $j \in [n-k-1]$ and \mbox{$\langle \tilde{\bm{\sigma}}_{n-k},\bm{t}\rangle=1$}.
Thus, the set $\{\tilde{\bm{\sigma}}_j\}_{j=1}^{n-k-1}$ forms a basis of a \mbox{$(\mathrm{dim}(\mathbb{S})-1)$}-dimensional subspace $\tilde{\mathbb{S}}\subset\mathbb{S}$ consisting of $2^{n-k-1}$ binary vectors ${\tilde{\bm{\sigma}}\in \tilde{\mathbb{S}}}$  such that  $\langle\tilde{\bm{\sigma}},\bm{t}\rangle=0$.
Furthermore, the coset ${\tilde{\bm{\sigma}}_{n-k}+\tilde{\mathbb{S}}}$ consists of $2^{n-k-1}$ binary vectors ${\tilde{\bm{\sigma}}\in \tilde{\bm{\sigma}}_{n-k}+\tilde{\mathbb{S}}}$  such that  $\langle\tilde{\bm{\sigma}},\bm{t}\rangle=1$.

Consider an arbitrary $\mathcal{D}_{\bm{\lambda},\bm{\varepsilon}}$
and set \mbox{$\beta:=\langle\bm{\lambda}+\bm{\varepsilon},\bm{t}\rangle\in\mathbb{F}_2$}.
Let \mbox{$\mathcal{D}^{(\beta)}_{\bm{\lambda},\bm{\varepsilon}}:=\bm{\lambda}+\bm{\varepsilon}+\tilde{\mathbb{S}}$} and \mbox{$\mathcal{D}^{(\beta+1)}_{\bm{\lambda},\bm{\varepsilon}}:=\bm{\lambda}+\bm{\varepsilon}+(\tilde{\bm{\sigma}}_{n-k}+\tilde{\mathbb{S}})$} with superscript operation conducted in $\mathbb{F}_2$.
Thus, by construction ${|\mathcal{D}^{(\beta)}_{\bm{\lambda},\bm{\varepsilon}}|=|\mathcal{D}^{(\beta+1)}_{\bm{\lambda},\bm{\varepsilon}}|=\frac{|\mathcal{D}_{\bm{\lambda},\bm{\varepsilon}}|}{2}=2^{n-k-1}}$.

Note that for any $\bm{e}\in \mathcal{D}^{(\beta)}_{\bm{\lambda},\bm{\varepsilon}}$ it follows that ${\bm{e}=\bm{\lambda}+\bm{\varepsilon}+\tilde{\bm{\sigma}}}$ for some $\tilde{\bm{\sigma}}\in \tilde{\mathbb{S}}$, and, therefore,
\[
\myspan{\bm{e},\bm{t}}%
=\myspan{\bm{\lambda}+\bm{\varepsilon},\bm{t}}+\myspan{\tilde{\bm{\sigma}},\bm{t}}=\beta+0=\beta.
\]
Similarly, for any $\bm{e}\in \mathcal{D}^{(\beta+1)}_{\bm{\lambda},\bm{\varepsilon}}$ it follows that $\myspan{\bm{e},\bm{t}}=\beta+1$.
Choosing $g=\myspan{\bm{e},\bm{t}}$ concludes the proof.
\end{IEEEproof}
\begin{remark}
Since such rows split the degeneracy sets, we call them \emph{splitters}.    
\end{remark}

\begin{remark}\label{remark:splitters}
     By appending $\Delta$ linearly independent splitters 
     with 
     linearly independent components \mbox{$\bm{\varepsilon}\in \mathbb{E}\setminus\{\bm{0}\}$} to the check matrix, each degeneracy set $\mathcal{D}_{\bm{\lambda},\bm{\varepsilon}}$ is split into $2^\Delta$ subsets of size $2^{n-k-\Delta}$, indexed by the respective preset syndrome component $\bm{g}\in \mathbb{F}_2^{\Delta}$, e.g., $\mathcal{D}^{(\bm{g})}_{\bm{\lambda},\bm{\varepsilon}}$.
\end{remark}

\section{Affine Subcode Ensemble Decoding}

Ensemble decoding employs $K$ constituent decoding paths in parallel to produce a set of diverse estimates $\hat{\be}_p$, \mbox{$p\in[K]$}, based on the extracted syndrome $\bm z$. The final estimate $\hat{\be}$ is selected from a candidate set \mbox{$\mathcal{V} \defeq \{\hat{\bm{e}}_p : \hat{\bm{e}}_p \text{ satisfies } \eqref{eq:syndrome_match}\}$} according to the minimum-weight selection \cite{MMJS24}
\begin{flalign*}
 \hat{\bm{e}}=\underset{\hat{\bm{e}}_p\in\mathcal{V}}{\arg \min}\: \mathrm{wt}(\phi^{-1}(\hat{\bm{e}}_p)),
\end{flalign*}
where \mbox{$\left| \mathcal{V} \right|\!=0$} results in a Type I failure. 

\emph{Affine subcode ensemble decoding} (aSCED)~\cite{MBtB26} leverages \mbox{$K=L\cdot 2^\Delta\in \mathbb{N}$} decoding paths, where $L\in \mathbb{N}$ is the number of \emph{batches}, and $2^\Delta\in \mathbb{N}$ is the number of paths in each batch.
\cref{fig:aSCED_batch} demonstrates the structure of one aSCED batch.
Each batch $\ell\in[L]$ decodes on a distinct extended check matrix 
\begin{flalign*}
    \bm H^{(\ell)} = 
    \begin{pmatrix}
        \bm H_X & \bm 0_{m_X\times n} \\
        \bm A_{X}^{(\ell)} & \bm 0_{\tfrac{\Delta}{2}\times n} \\
        \bm 0_{m_Z\times n} & \bm H_Z \\
        \bm 0_{\tfrac{\Delta}{2}\times n} & \bm A_{Z}^{(\ell)}
\end{pmatrix}\in\F_2^{(m+\Delta) \times 2n},
\end{flalign*}
where \mbox{$\bm A_{X}^{(\ell)},\bm A_{Z}^{(\ell)}\in\mathbb{F}_2^{\frac{\Delta}{2} \times n}$} each consist of $\tfrac{\Delta}{2}$ rows 
such that \mbox{$\text{rank}(\bm H^{(\ell)})=\text{rank}(\bm H)+\Delta$}. 
The rows of $\bm A_{\zeta}^{(\ell)}$, \mbox{$\zeta\in\{X,Z\}$} are generated using~\cite[Algorithm 1]{MJS25} to avoid introducing additional cycles of length $4$ in the Tanner graph representation of the respective component PCMs.

\begin{figure}[t]
    \centering
    \includegraphics[width=0.85\columnwidth]{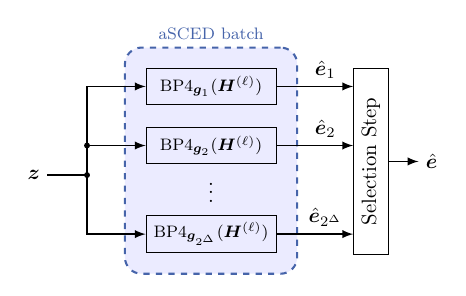}
    \vspace{-0.4cm}
    \caption{Block diagram of an aSCED batch with $2^\Delta$ distinct paths decoding with the same extended check matrix $\bm H^{(\ell)}$. Each path uses a different syndrome configuration $\bm g_\delta\in\mathbb{F}_2^{\Delta}$, $\delta\in[2^\Delta]$, corresponding to the $\Delta$ splitters.
    }
    \label{fig:aSCED_batch}
\end{figure}

For $\delta\in[2^\Delta]$, the $\delta$th path in the $\ell$th batch aims at solving the decoding problem given $\bH^{(\ell)}$ and the extended syndrome 
\begin{flalign}
\label{eq:path-syndrome}
    \bm z_\delta= \left( \bm z_X,\ \bm g_{X,\delta},\ \bm z_Z,\ \bm g_{Z,\delta}\right)\in\mathbb{F}_2^{m+\Delta},
\end{flalign}
where \mbox{$(\bm g_{X,\delta},\bm g_{Z,\delta})\eqdef \bg_\delta\in\mathbb{F}_2^\Delta$} denotes the binary representation of $\delta-1$. 
Hence,  the paths in one batch cover all possible syndrome configurations $\bg_\delta$ corresponding to the additional rows in $\bm A_{\zeta}^{(\ell)}$, \mbox{$\zeta\in\{X,Z\}$}.

According to Lemma~1 and Remark~\ref{remark:splitters}, if the rows do not correspond to logical operators, the $\Delta$ additional rows of $\bm{H}^{(\ell)}$
act as splitters that partition each degeneracy set $\cD_{\bm{\lambda},\bm{\varepsilon}}$ into $2^\Delta$ disjoint subsets $\cD^{(\bm{g}_\delta)}_{\bm{\lambda},\bm{\varepsilon}}$ of size $2^{n-k-\Delta}$. 
Thus, the $\delta$th decoding path in the $\ell$th batch aims at identifying an error estimate $\hat{\bm{e}}$ in the \emph{local degeneracy set} $\cD^{(\bm{g}_\delta)}_{\bm{\lambda},\bm{\varepsilon}}$, while the $\ell$th batch maintains degeneracy-aware decoding on the full degeneracy set
${\cD_{\bm{\lambda},\bm{\varepsilon}} = \bigcup_{\delta=1}^{2^\Delta}  \cD^{(\bm{g}_\delta)}_{\bm{\lambda},\bm{\varepsilon}}}$.

Furthermore, \emph{overcomplete check matrices} have been shown to enhance the performance of BP decoding and to improve the convergence to valid error estimates \cite{Overcomplete2006, MSLS25}.

Linearly independent rows used for aSCED can provide benefits for constructing overcomplete check matrices~\cite{MBtB26}.
First, they enlarge the pool of redundant checks obtainable through linear combinations of rows in the vertically stacked matrix $[\bm H_\zeta\,;\bm A_{\zeta}^{(\ell)}]$, \mbox{$\zeta\in\{X,Z\}$} and enable more distinct overcomplete check matrices across the ensemble~\cite[Sec.VI.B]{MBtB26}. 
Second, choosing a low weight for those additional rows allows for sparser overcomplete check matrices with fewer short cycles in the Tanner graph~\cite[Sec.V.C]{MBtB26}. %

Note that the syndrome corresponding to an overcomplete check matrix in the $\delta$th path is computed from the extended syndrome in \eqref{eq:path-syndrome} by \mbox{$\bm{z}_{\delta,\mathrm{oc}}^\top = \bm{M}\bm{z}_\delta^\top\in\F_2^{m_{\rm oc}}$}. Hereby, \mbox{$\bM\in\F_2^{m_{\rm oc}\times (m+\Delta)}$} is a full-rank matrix that is used for adding linearly dependent rows to the extended check matrix in order to form the overcomplete matrix via \mbox{$\bm{H}^{(\ell)}_\mathrm{oc}= \bm{M}\bm{H}^{(\ell)}$}.

\section{Numerical Results}

We evaluate the proposed aSCED design for two families of CSS codes: toric codes \cite{Kitaev2003} and generalized bicycle (GB) codes \cite{Panteleev2021}.
In particular, we focus on the $\llbracket 128,2,8\rrbracket$ toric code with \mbox{$m_\text{oc}\!=\!384$} rows in its overcomplete check matrix, the $\llbracket 46,2,9\rrbracket$ GB code with \mbox{$m_\text{oc}\!=\!800$}, and the higher-rate $\llbracket 126,28,8\rrbracket$ GB code with \mbox{$m_\mathrm{oc}\!=\!126$}.
For constructing sparse overcomplete check matrices, we use the probabilistic search from \cite{Leon1988}.

The numerical results are obtained by Monte-Carlo simulations assuming a depolarizing channel with physical error probability $p$. Note that for initializing BP4, we use some fixed $p_0$, which not necessarily equals $p$, as this approach has empirically been shown to improve performance \cite{MSLS25}.
We assess decoding performance using the
logical error rate (LER), defined as the fraction of trials in which decoding fails to restore the logical state (Type I or II failure).
Simulations are run until at least $400$ logical errors are recorded or $10^6$ trials have been conducted. 

We use \mbox{$\{\text{BP4},\text{OBP4}\}$-aSCED$K$} to denote an ensemble with $K$
constituent decoders of the respective type, where OBP4 refers to decoding with an overcomplete check matrix.
BP4 uses a maximum of \mbox{$I_\text{max}\!=\!25$} decoding iterations, while OBP4 is limited to \mbox{$I_\text{max}\!=\!12$}.
Following \cite{MSLS25}, we fix \mbox{$p_0\!=\!0.49$} and \mbox{$p_0\!=\!0.3$} for OBP4 on the  $\llbracket 128,2,8\rrbracket$ toric and $\llbracket 46,2,9\rrbracket$ GB code, respectively.
For the $\Delta$ splitters in each aSCED batch, we use rows of weight $4$. 

\subsection{Discussion on Ensemble Parameters}

\begin{figure}[t]
    \centering
    \definecolor{kit-green}{RGB}{0, 150, 130}
\definecolor{kit-blue}{RGB}{70, 100, 170}
\definecolor{kit-red}{RGB}{162, 34, 35}
\definecolor{kit-orange}{RGB}{223, 155, 27}
\definecolor{kit-yellow}{RGB}{252, 229, 0}
\definecolor{kit-purple}{RGB}{163, 16, 124}

\begin{tikzpicture}
    \pgfplotsset{grid style={gray}}
	\pgfplotsset{every tick label/.append style={font=\footnotesize}}
  \begin{axis}[
    xshift = 1.5cm,
    xmin = 0.03,   
    xmax = 0.1,
    ymin = 1e-5,   
    ymax = 1,
    ymode = log,
    xtick = {0.04, 0.06, 0.08, 0.10},
    xticklabel style = {/pgf/number format/fixed},
    axis background/.style = {fill=white, mark size=1.5pt},
    xmajorgrids,
    ymajorgrids,
    width = 8.5cm,
    height = \ferfigheight,
    xlabel={$p$},
    ylabel={LER},
    label style={font=\small},
    legend cell align={left},
    legend style={anchor = south east,  at={(1,0)}, fill opacity=1, text opacity = 1,legend columns=1,font=\footnotesize, row sep = 0pt}
  ]

    \addlegendentry{Baseline}
    \addlegendentry{$K\!=\!4,\phantom{0}\, \Delta\!=\!2,\, L\!=\!1$}
    \addlegendentry{$K\!=\!16,\, \Delta\!=\!4,\, L\!=\!1$}
    \addlegendentry{$K\!=\!16,\, \Delta\!=\!2,\, L\!=\!4$}
    \addlegendentry{$K\!=\!64,\, \Delta\!=\!2,\, L\!=\!16$}

    \addplot[
        color = black,
        line width = 1pt,
        mark options = {solid},
        mark size = 2pt,
        dotted
    ] coordinates {(0.01, 1e-5)};

    \addplot[
        color = kit-green, %
        line width = 0.75pt,
        mark options = {solid},
        mark size = 2pt,
        mark = triangle*,
    ] coordinates {(0.01, 1e-5)};

    \addplot[
        color = kit-orange,
        line width = 0.75pt,
        mark options = {solid},
        mark size = 1.5pt,
        mark = square*,
    ] coordinates {(0.01, 1e-5)};

    \addplot[
        color = kit-blue!70,
        line width = 0.75pt,
        mark options = {solid},
        mark size = 3.5pt,
        mark = x,
    ] coordinates {(0.01, 1e-5)};

    \addplot[
        color = kit-blue,
        line width = 0.75pt,
        mark options = {solid},
        mark size = 1.5pt,
        mark= *
    ] coordinates {(0.01, 1e-5)};

    \addplot[
        color = black,
        line width = 1pt,
        mark options = {solid},
        mark size = 2pt,
        dotted
    ]
        coordinates {
            (0.015, 0.017052)
            (0.03, 0.064418)
            (0.045, 0.139311)
            (0.06, 0.231093)
            (0.075, 0.330529)
            (0.09, 0.442626)
            (0.105, 0.565643)
            (0.12, 0.679454)
            (0.135, 0.776119)
            (0.15, 0.855901)
        };

    \addplot[
        color = black,
        line width = 1pt,
        mark options = {solid},
        mark size = 2pt,
        dotted
    ]
        coordinates {
            (0.03, 0.000368)
            (0.04, 0.001387)
            (0.05, 0.004323)
            (0.06, 0.010042)
            (0.07, 0.019989)
            (0.08, 0.036489)
            (0.09, 0.055848)
            (0.1, 0.086389)
        };

    \addplot[
        color = kit-green, %
        line width = 0.75pt,
        mark options = {solid},
        mark size = 2pt,
        mark = triangle*,
    ]
        coordinates {
            (0.02, 0.004154)
            (0.03, 0.014138)
            (0.04, 0.033573)
            (0.05, 0.062784)
            (0.06, 0.105655)
            (0.07, 0.160523)
            (0.08, 0.223261)
            (0.09, 0.304566)
            (0.1, 0.375134)
        };

    \addplot[
        color = kit-orange,
        line width = 0.75pt,
        mark options = {solid},
        mark size = 1.5pt,
        mark = square*,
    ]
        coordinates {
            (0.02, 0.003268)
            (0.03, 0.011132)
            (0.04, 0.025057)
            (0.05, 0.051467)
            (0.06, 0.085594)
            (0.07, 0.135647)
            (0.08, 0.192373)
            (0.09, 0.257163)
            (0.1, 0.332919)
        };

    \addplot[
        color = kit-blue!70,
        line width = 0.75pt,
        mark options = {solid},
        mark size = 3.5pt,
        mark = x,
    ]
        coordinates {
            (0.02, 0.001642)
            (0.03, 0.006652)
            (0.04, 0.017651)
            (0.05, 0.038733)
            (0.06, 0.070538)
            (0.07, 0.11175)
            (0.08, 0.172394)
            (0.09, 0.24173)
            (0.1, 0.315597)
        };

    \addplot[
        color = kit-orange,
        line width = 0.75pt,
        mark options = {solid},
        mark size = 1.5pt,
        mark = square*,
    ]
        coordinates {
            (0.02, 2.8e-05)
            (0.03, 0.000213)
            (0.04, 0.00093)
            (0.05, 0.002749)
            (0.06, 0.006562)
            (0.07, 0.013254)
            (0.08, 0.024987)
            (0.09, 0.041735)
            (0.1, 0.064655)
        };

    \addplot[
        color = kit-blue!70,
        line width = 0.75pt,
        mark options = {solid},
        mark size = 3.5pt,
        mark = x,
    ]
        coordinates {
            (0.03, 0.000118)
            (0.04, 0.000527)
            (0.05, 0.001723)
            (0.06, 0.004368)
            (0.07, 0.010031)
            (0.08, 0.01907)
            (0.09, 0.033258)
            (0.1, 0.05203)
        };

    \addplot[
        color = kit-blue,
        line width = 0.75pt,
        mark options = {solid},
        mark size = 1.5pt,
        mark= *
    ]
        coordinates {
            (0.03, 9.1e-05)
            (0.04, 0.000411)
            (0.05, 0.00132)
            (0.06, 0.003519)
            (0.07, 0.008015)
            (0.08, 0.015414)
            (0.09, 0.026903)
            (0.1, 0.044295)
        };

    \addplot[black, semithick, smooth, no marks, domain=-115:135, samples=120]
      ( { 0.03 + 0.07*(0.215 + 0.03*cos(\x)) },
        { 10^(5*(0.76 + 0.10*sin(\x)) - 5) } );
    \node[black, font=\footnotesize, anchor=south]
      at (axis cs: 0.044, 0.21) {BP4};
 
    \addplot[black, semithick, smooth, no marks, domain=-115:145, samples=120]
      ( { 0.03 + 0.07*(0.215 + 0.025*cos(\x)) },
        { 10^(5*(0.430 + 0.08*sin(\x)) - 5) } );
    \node[black, font=\footnotesize, anchor=north]
      at (axis cs: 0.047, 0.0005) {OBP4};

  \end{axis}
\end{tikzpicture}
    \vspace{-1.15cm}
    \caption{LER of different \mbox{$\{\text{BP4},\text{OBP4}\}$-aSCED$K$} configurations on the $\llbracket 46,2,9\rrbracket$ GB code with stand-alone BP4 (or OBP4) as a baseline. 
    } 
    \label{fig:GB_A3}
\end{figure}

Fig.~\ref{fig:GB_A3} shows the LER of different \mbox{$\{\text{BP4},\text{OBP4}\}$-aSCED$K$} configurations on the $\llbracket 46,2,9\rrbracket$ GB code. All configurations outperform their respective stand-alone BP4 decoder (baseline) across the entire simulated regime. 

Furthermore, we observe that, when increasing the ensemble size $K$, increasing $L$ yields a larger performance gain than increasing $\Delta$, i.e., increasing the number of batches is preferable over increasing the number of paths per batch. 
For instance, for BP4, consider the gap between \mbox{$K\!=\!4, \Delta\!=\!2, L\!=\!1$} and \mbox{$K\!=\!16, \Delta\!=\!4, L\!=\!1$}, compared to the the gap between \mbox{$K\!=\!4, \Delta\!=\!2, L\!=\!1$} and \mbox{$K\!=\!16, \Delta\!=\!2, L\!=\!4$}, where the latter is significantly larger. 
This behavior may be attributed to increased diversity between batches: increasing $L$ allows for more distinct check matrices $\bm H^{(\ell)}$.
Based on this observation, we set \mbox{$\Delta\!=\!2$} for the remainder of the section, unless stated otherwise.
Further simulations show that the performance gain of using an additional path decoding with $\bm H$ vanishes for larger ensemble sizes, e.g., $K\geq16$.

\subsection{Results for Overcomplete Check Matrices}

\begin{figure}[t]
    \centering
    \definecolor{kit-green}{RGB}{0, 150, 130}
\definecolor{kit-blue}{RGB}{70, 100, 170}
\definecolor{kit-red}{RGB}{162, 34, 35}
\definecolor{kit-orange}{RGB}{223, 135, 27}
\definecolor{kit-yellow}{RGB}{252, 229, 0}
\definecolor{kit-purple}{RGB}{163, 16, 124}

\definecolor{aSCED256}{RGB}{245, 0, 131}

\begin{tikzpicture}
    \pgfplotsset{grid style={gray}}
	\pgfplotsset{every tick label/.append style={font=\footnotesize}}
  \begin{axis}[
    xshift = 1.5cm,
    xmin = 0.015,   
    xmax = 0.15,
    ymin = 1e-6,   
    ymax = 1,
    ymode = log,
    xtick = {0.03, 0.06, 0.09, 0.12, 0.15},
    ytick = {1e-6, 1e-5, 1e-4, 1e-3, 1e-2, 1e-1, 1},
    xticklabel style = {/pgf/number format/fixed},
    axis background/.style = {fill=white, mark size=1.5pt},
    xmajorgrids,
    ymajorgrids,
    width = 8.5cm,
    height = \ferfigheight,
    xlabel={$p$},
    ylabel={LER},
    label style={font=\small},
    legend cell align={left},
    legend style={anchor = south east,  at={(1,0)}, fill opacity=1, text opacity = 1,legend columns=1,font=\footnotesize, row sep = 0pt}
  ]

    \addlegendentry{BP4}
    \addlegendentry{OBP4}
    \addlegendentry{MWPM \cite{Pymatching}}
    \addlegendentry{cMWPM \cite{Pymatching}}
    \addlegendentry{OBP4-aSCED64}
    \addlegendentry{OBP4-aSCED256}

    \addplot[
        color = black,
        line width = 1pt,
        mark options = {solid},
        mark size = 2pt,
        dotted
    ]
        coordinates {
            (0.015, 0.017052)
            (0.03, 0.064418)
            (0.045, 0.139311)
            (0.06, 0.231093)
            (0.075, 0.330529)
            (0.09, 0.442626)
            (0.105, 0.565643)
            (0.12, 0.679454)
            (0.135, 0.776119)
            (0.15, 0.855901)
        };

    \addplot[
        color = black,
        line width = 0.5pt,
        mark options = {solid},
        mark size = 2pt,
        dashed,
        dash pattern=on 4pt off 2pt,
    ]
        coordinates {
            (0.015, 0.004073)
            (0.03, 0.00451)
            (0.045, 0.00742)
            (0.06, 0.020493)
            (0.075, 0.049495)
            (0.09, 0.105871)
            (0.105, 0.174827)
            (0.12, 0.280998)
            (0.135, 0.41612)
            (0.15, 0.529521)
        };

    \addplot[
        color = kit-orange,
        line width = 0.75pt,
        mark options = {solid},
        mark size = 2pt,
        mark = o,
        dashed,
        dash pattern=on 4pt off 2pt,
    ]
        coordinates {
            (0.015, 1.8e-05)
            (0.03, 0.000493)
            (0.045, 0.003454)
            (0.06, 0.012725)
            (0.075, 0.035051)
            (0.09, 0.074585)
            (0.105, 0.146359)
            (0.12, 0.229885)
            (0.135, 0.333333)
            (0.15, 0.435256)
        };

    \addplot[
        color = kit-green,
        line width = 0.75pt,
        mark options = {solid},
        mark size = 2pt,
        mark = square,
        dashed,
        dash pattern=on 4pt off 2pt,
    ]
        coordinates {
            (0.015, 3e-06)
            (0.03, 9.4e-05)
            (0.045, 0.000802)
            (0.06, 0.00424)
            (0.075, 0.01461)
            (0.09, 0.03675)
            (0.105, 0.07817)
            (0.12, 0.14238)
            (0.135, 0.23288)
            (0.15, 0.33991)
        };

    \addplot[
        color = kit-blue,
        line width = 0.75pt,
        mark options = {solid},
        mark size = 1.5pt,
        mark= *
    ]
        coordinates {
            (0.015, 2e-06)
            (0.03, 5.7e-05)
            (0.045, 0.000522)
            (0.06, 0.00275)
            (0.075, 0.010074)
            (0.09, 0.027075)
            (0.105, 0.057811)
            (0.12, 0.111869)
            (0.135, 0.192348)
            (0.15, 0.294942)
        };

    \addplot[
        color = kit-purple, %
        line width = 0.75pt,
        mark options = {solid},
        mark size = 2pt,
        mark= triangle*
    ]
        coordinates {
            (0.015, 1.3e-06)
            (0.03, 4.2e-05)
            (0.045, 0.000396)
            (0.06, 0.002082)
            (0.075, 0.007908)
            (0.09, 0.020344)
            (0.105, 0.047201)
            (0.12, 0.098449)
            (0.135, 0.165805)
            (0.15, 0.262526)
        };

  \end{axis}
\end{tikzpicture}
    \vspace{-1.15cm}
    \caption{LER of OBP4-aSCED$K$, \mbox{$K\in\{64,256\}$}, and state-of-the-art decoders, i.e., MWPM and correlated MWPM, on the $\llbracket 128,2,8\rrbracket$ toric code. 
    }
    \label{fig:toric_d8}
\end{figure}

\begin{figure}[t]
    \centering
     
\definecolor{kit-green}{RGB}{0, 150, 130}
\definecolor{kit-blue}{RGB}{70, 100, 170}
\definecolor{kit-red}{RGB}{162, 34, 35}
\definecolor{kit-orange}{RGB}{223, 135, 27}
\definecolor{kit-yellow}{RGB}{252, 229, 0}
\definecolor{kit-purple}{RGB}{163, 16, 124}

\definecolor{aSCED256}{RGB}{245, 0, 131}
 
\begin{tikzpicture}
    \pgfplotsset{grid style={gray}}
	\pgfplotsset{every tick label/.append style={font=\footnotesize}}
  \begin{axis}[
    xshift = 1.5cm,
    xmin = 0.015,   
    xmax = 0.15,
    ymin = 0,   
    ymax = 1,
    xtick = {0.03, 0.06, 0.09, 0.12, 0.15},
    xticklabel style = {/pgf/number format/fixed},
    axis background/.style = {fill=white, mark size=1.5pt},
    xmajorgrids,
    ymajorgrids,
    width = 8.5cm,
    height = 5.cm,
    xlabel = {$p$},
    ylabel = {Fraction of Type I failures},
    label style={font=\small},
    legend cell align={left},
    legend style={anchor = south, at={(0.5,1.06)}, fill opacity=1, text opacity = 1, legend columns=2, font=\footnotesize, row sep = 0pt, /tikz/every even column/.append style={column sep=0.25cm}, text width = 2.45cm}
    ]
 
    \addlegendentry{OBP4\phantom{-aSCED64}}
    \addlegendentry{OBP4-aSCED16}
    \addlegendentry{OBP4-aSCED64}
    \addlegendentry{OBP4-aSCED256}
 
    \addplot[
        color = black,
        line width = 0.5pt,
        mark options = {solid},
        mark size = 2pt,
        dashed,
        dash pattern=on 4pt off 2pt,
    ]
        coordinates {
            (0.015, 0.998077)
            (0.030, 0.972659)
            (0.045, 0.871000)
            (0.060, 0.826719)
            (0.075, 0.760836)
            (0.090, 0.770677)
            (0.105, 0.775046)
            (0.120, 0.769534)
            (0.135, 0.766970)
            (0.150, 0.811404)
        };
 
    \addplot[
        color = kit-blue!70,
        line width = 0.75pt,
        mark options = {solid},
        mark size = 3.5pt,
        mark= x,
    ]
        coordinates {
            (0.015, 0.112826)
            (0.030, 0.144538)
            (0.045, 0.159925)
            (0.060, 0.161052)
            (0.075, 0.154706)
            (0.090, 0.157593)
            (0.105, 0.183685)
            (0.120, 0.215801)
            (0.135, 0.271345)
            (0.150, 0.334015)
        };
 
    \addplot[
        color = kit-blue,
        line width = 0.75pt,
        mark options = {solid},
        mark size = 1.5pt,
        mark= *
    ]
        coordinates {
          (0.015, 0.00851)
          (0.030, 0.01599)
          (0.045, 0.02013)
          (0.060, 0.02961)
          (0.075, 0.02495)
          (0.090, 0.02931)
          (0.105, 0.04254)
          (0.120, 0.05721)
          (0.135, 0.09900)
          (0.150, 0.16554)
        };
 
    \addplot[
        color = kit-purple, %
        line width = 0.75pt,
        mark options = {solid},
        mark size = 2pt,
        mark= triangle*
    ]
        coordinates {
            (0.015, 0.000000)
            (0.030, 0.000000)
            (0.045, 0.002695)
            (0.060, 0.000000)
            (0.075, 0.002564)
            (0.090, 0.009063)
            (0.105, 0.008969)
            (0.120, 0.033613)
            (0.135, 0.045872)
            (0.150, 0.101391)
        };
 
    \end{axis}
\end{tikzpicture}
    \vspace{-1.15cm}
    \caption{Fraction of Type I failures relative to the total number of logical errors of OBP4 and OBP4-aSCED$K$, \mbox{$K\in\{16,64,256\}$}, on the $\llbracket 128,2,8\rrbracket$ toric code, as a metric for the convergence reliability of the decoders.}
    \label{fig:toric_d8_convergence}
\end{figure}

Having discussed the parametrization of aSCED, we now compare aSCED using overcomplete matrices against state-of-the-art decoders, namely binary minimum weight perfect matching (MWPM)~\cite{MWPM2002}, correlated MWPM (cMWPM)~\cite{fowler2013optimalcomplexitycorrectioncorrelated}, and (binary) BP with OSD post-processing~\cite{Panteleev2021} (BP-OSD).
For the numerical results of binary MWPM and cMWPM, we use the Pymatching library \cite{Pymatching}. 
For binary MWPM, $X$-~and $Z$ errors are decoded independently by separate MWPM instances operating on $\bm H_Z$ and $\bm H_X$, respectively. The final error estimate of binary MWPM is obtained by combining the binary estimates of both instances.
The numerical results of BP-OSD are obtained similarly.

Fig.~\ref{fig:toric_d8} shows the LER of OBP4-aSCED$K$, binary MWPM, and cMWPM on the $\llbracket 128,2,8\rrbracket$ toric code. 
We observe that OBP4-aSCED$K$ outperforms both MWPM decoders for larger ensemble sizes, i.e., $K\in\{64,256\}$.

Fig.~\ref{fig:toric_d8_convergence} shows the fraction of Type I failures relative to the total number of logical errors for OBP4 and \mbox{OBP4-aSCED$K$} on $\llbracket 128,2,8\rrbracket$ toric code.
Note that we did not include stand-alone BP4 since we observed that almost all its failures are of Type I.
OBP4 reduces the fraction of Type I failures to $0.75$ for \mbox{$p\!=\!0.075$}.
\mbox{OBP4-aSCED$K$} further mitigates the non-convergence issues of OBP4, reducing this fraction to $0.15$ for \mbox{$K\!=\!16$} and $0.0025$ for \mbox{$K\!=\!256$}.
Further simulations show that for the $\llbracket 46,2,9\rrbracket$ GB code, \mbox{OBP4-aSCED64} exhibits no Type I failures across the entire simulated regime, i.e., always converges to a valid estimate. In practice, this largely alleviates the need for a post-processing step such as OSD.

\begin{figure}[t]
    \centering
    \definecolor{kit-green}{RGB}{0, 150, 130}
\definecolor{kit-blue}{RGB}{70, 100, 170}
\definecolor{kit-red}{RGB}{162, 34, 35}
\definecolor{kit-orange}{RGB}{223, 155, 27}
\definecolor{kit-yellow}{RGB}{252, 229, 0}
\definecolor{kit-purple}{RGB}{163, 16, 124}

\definecolor{aSCED256}{RGB}{245, 0, 131}

\begin{tikzpicture}
    \pgfplotsset{grid style={gray}}
	\pgfplotsset{every tick label/.append style={font=\footnotesize}}
  \begin{axis}[
    xshift = 1.5cm,
    xmin = 0.03,   
    xmax = 0.1,
    ymin = 1e-4,   
    ymax = 1,
    ymode = log,
    xtick = {0.04, 0.06, 0.08, 0.1},
    xticklabel style = {/pgf/number format/fixed},
    axis background/.style = {fill=white, mark size=1.5pt},
    xmajorgrids,
    ymajorgrids,
    width = 8.5cm,
    height = \ferfigheight,
    xlabel={$p$},
    ylabel={LER},
    label style={font=\small},
    legend cell align={left},
    legend style={anchor = south east,  at={(1,0)}, fill opacity=1, text opacity = 1,legend columns=1,font=\footnotesize, row sep = 0pt}
  ]

    \addlegendentry{BP-OSD}
    \addlegendentry{OBP4}
    \addlegendentry{OBP4-aSCED16}
    \addlegendentry{OBP4-aSCED64}
    \addlegendentry{OBP4-aSCED256}

    \addplot[
        color = kit-orange,
        line width = 0.75pt,
        mark options = {solid},
        mark size = 2pt,
        mark = o,
        dashed,
        dash pattern=on 4pt off 2pt,
    ]
        coordinates {
            (0.03, 0.012354)
            (0.04, 0.036704)
            (0.05, 0.082491)
            (0.06, 0.159236)
            (0.07, 0.267917)
            (0.08, 0.373832)
            (0.09, 0.539084)
            (0.1, 0.651466)
        };

    \addplot[
        color = black,
        line width = 0.75pt,
        mark options = {solid},
        mark size = 2pt,
        dashed,
        dash pattern=on 4pt off 2pt,
    ]
        coordinates {
            (0.02, 0.000256)
            (0.03, 0.002132)
            (0.04, 0.008963)
            (0.05, 0.03032)
            (0.06, 0.078154)
            (0.07, 0.157155)
            (0.08, 0.257311)
            (0.09, 0.378696)
            (0.1, 0.540577)
        };

    \addplot[
      color = kit-blue!70,
        line width = 0.75pt,
        mark options = {solid},
        mark size = 3pt,
        mark= x,
    ]
        coordinates {
            (0.03, 0.000313)
            (0.04, 0.002172)
            (0.05, 0.009174)
            (0.06, 0.028651)
            (0.07, 0.068217)
            (0.08, 0.139971)
            (0.09, 0.26854)
            (0.1, 0.390999)
        };

    \addplot[
        color = kit-blue,
        line width = 0.75pt,
        mark options = {solid},
        mark size = 1.5pt,
        mark= *
    ]
        coordinates {
            (0.03, 0.000194)
            (0.04, 0.001332)
            (0.05, 0.006132)
            (0.06, 0.02068)
            (0.07, 0.05869)
            (0.08, 0.117591)
            (0.09, 0.217914)
            (0.1, 0.335894)
        };

    \addplot[
        color = kit-purple, %
        line width = 0.75pt,
        mark options = {solid},
        mark size = 2pt,
        mark= triangle*
    ]
        coordinates {
            (0.03, 0.000141)
            (0.04, 0.001038)
            (0.05, 0.004455)
            (0.06, 0.016611)
            (0.07, 0.046351)
            (0.08, 0.098692)
            (0.09, 0.187484)
            (0.1, 0.306691)
        };

  \end{axis}
\end{tikzpicture}
    \vspace{-1.15cm}
    \caption{LER of OBP4-aSCED$K$, \mbox{$K\in\{16,64,256\}$}, and BP-OSD on the $\llbracket 126,28,8\rrbracket$ GB code. BP and OBP4 are limited to \mbox{$I_\text{max}\!=\!200$} decoding iterations. OSD post-processing is applied at order $10$.}
    \label{fig:GB_OC}
\end{figure}

Finally, Fig.~\ref{fig:GB_OC} shows the LER of OBP4-aSCED$K$ and BP-OSD on the $\llbracket 126,28,8 \rrbracket$ GB code, i.e., for a code with \mbox{$k>2$}. 
Note that OSD post-processing is applied at order $10$.
For OBP4-aSCED$K$, each batch uses 
\mbox{$\Delta\!=\!2$} splitters of weight~$6$. Furthermore, we limit both BP-OSD and OBP4 to \mbox{$I_\mathrm{max}\!=\!200$} iterations, and fix $p_0\!=\!0.1$ for the latter. Again, OBP4-aSCED$K$ yields strong decoding performance and outperforms BP-OSD significantly.

\section{Conclusion and Outlook}
In this work, we characterize degeneracy by introducing \emph{degeneracy sets}. 
We prove that appending linearly independent rows (\emph{splitters}) to the check matrix partitions each degeneracy set into disjoint subsets of equal size.
Each constituent path of aSCED then exploits this splitting effect by decoding under the influence of a smaller degeneracy set, while the whole ensemble still preserves full degeneracy-awareness.
Simulation results demonstrate that aSCED significantly improves convergence compared to stand-alone BP and outperforms correlated MWPM for the $\llbracket 128,2,8\rrbracket$ toric code.
In particular, for the $\llbracket 46,2,9\rrbracket$ GB code, aSCED can fully mitigate Type~I failures due to non-convergence and reduces the amount of Type~II failures (logical error rate).
Future work will extend our framework beyond the depolarizing channel to incorporate fault-tolerant decoding scenarios.

\newpage

\IEEEtriggeratref{23}


\begin{thebibliography}{10}
\providecommand{\url}[1]{#1}
\csname url@samestyle\endcsname
\providecommand{\newblock}{\relax}
\providecommand{\bibinfo}[2]{#2}
\providecommand{\BIBentrySTDinterwordspacing}{\spaceskip=0pt\relax}
\providecommand{\BIBentryALTinterwordstretchfactor}{4}
\providecommand{\BIBentryALTinterwordspacing}{\spaceskip=\fontdimen2\font plus
\BIBentryALTinterwordstretchfactor\fontdimen3\font minus \fontdimen4\font\relax}
\providecommand{\BIBforeignlanguage}[2]{{%
\expandafter\ifx\csname l@#1\endcsname\relax
\typeout{** WARNING: IEEEtran.bst: No hyphenation pattern has been}%
\typeout{** loaded for the language `#1'. Using the pattern for}%
\typeout{** the default language instead.}%
\else
\language=\csname l@#1\endcsname
\fi
#2}}
\providecommand{\BIBdecl}{\relax}
\BIBdecl

\bibitem{Shor1995}
P.~W. Shor, ``Scheme for reducing decoherence in quantum computer memory,'' \emph{Phys. Rev. A}, vol.~52, pp. R2493--R2496, 1995.

\bibitem{Gottesman1997}
D.~Gottesman, ``Stabilizer codes and quantum error correction,'' Ph.D. dissertation, California Institute of Technology, 1997.

\bibitem{wilde2013quantum}
M.~M. Wilde, \emph{Quantum Information Theory}.\hskip 1em plus 0.5em minus 0.4em\relax Cambridge university press, 2013.

\bibitem{MacKay2004}
D.~J.~C. MacKay, G.~Mitchison, and P.~L. McFadden, ``Sparse-graph codes for quantum error correction,'' \emph{IEEE Trans. Inf. Theory}, vol.~50, no.~10, pp. 2315--2330, 2004.

\bibitem{TillichZemor2014}
J.-P. Tillich and G.~Z{\'e}mor, ``Quantum {LDPC} codes with positive rate and minimum distance proportional to the square root of the blocklength,'' \emph{IEEE Trans. Inf. Theory}, vol.~60, no.~2, pp. 1193--1202, Feb. 2014.

\bibitem{Panteleev2021}
P.~Panteleev and G.~Kalachev, ``Degenerate quantum {LDPC} codes with good finite length performance,'' \emph{Quantum}, vol.~5, p. 585, Nov. 2021.

\bibitem{bravyi_high-threshold_2024}
S.~Bravyi, A.~W. Cross, J.~M. Gambetta, D.~Maslov, P.~Rall, and T.~J. Yoder, ``High-threshold and low-overhead fault-tolerant quantum memory,'' \emph{Nature}, vol. 627, no. 8005, pp. 778--782, Mar. 2024.

\bibitem{Gallager1962}
R.~G. Gallager, ``Low-density parity-check codes,'' \emph{IRE Trans. Inf. Theory}, vol.~8, no.~1, pp. 21--28, 1962.

\bibitem{shor1996quantumerrorcorrectingcodesneed}
P.~W. Shor and J.~A. Smolin, ``Quantum error-correcting codes need not completely reveal the error syndrome,'' in \emph{Proc. Annu. Symp. Found. Comput. Sci. (FOCS)}, Burlington, VT, USA, Oct. 1996.

\bibitem{iyer2015hardness}
P.~Iyer and D.~Poulin, ``Hardness of decoding quantum stabilizer codes,'' \emph{IEEE Trans. Inf. Theory}, vol.~61, no.~9, pp. 5209--5223, 2015.

\bibitem{PoulinChung2008}
D.~Poulin and Y.~Chung, ``On the iterative decoding of sparse quantum codes,'' \emph{Quantum Inf. Comput.}, vol.~8, no.~10, pp. 987--1000, Nov. 2008.

\bibitem{GuidedDecimation2024}
H.~Yao, W.~{Abu Laban}, C.~H{\"a}ger, A.~{Graell i Amat}, and H.~D. Pfister, ``Belief propagation decoding of quantum {LDPC} codes with guided decimation,'' in \emph{Proc. IEEE Int. Symp. Inf. Theory (ISIT)}, Athens, Greece, Jul. 2024.

\bibitem{MSLS25}
S.~Miao, A.~Schnerring, H.~Li, and L.~Schmalen, ``Quaternary neural belief propagation decoding of quantum {LDPC} codes with overcomplete check matrices,'' \emph{IEEE Access}, vol.~13, pp. 25\,637--25\,649, Feb. 2025.

\bibitem{Raveendran2021}
N.~Raveendran and B.~Vasi\'{c}, ``Trapping sets of quantum {LDPC} codes,'' \emph{Quantum}, vol.~5, p. 562, Oct. 2021.

\bibitem{mueller2025improvedbeliefpropagationsufficient}
T.~M{\"u}ller, T.~Alexander, M.~E. Beverland, M.~Bühler, B.~R. Johnson, T.~Maurer, and D.~Vandeth, ``Improved belief propagation is sufficient for real-time decoding of quantum memory,'' 2025.

\bibitem{roffe_decoding_2020}
J.~Roffe, D.~R. White, S.~Burton, and E.~T. Campbell, ``Decoding across the quantum {LDPC} code landscape,'' \emph{Phys. Rev. Res.}, vol.~2, no.~4, p. 043423, Dec. 2020.

\bibitem{hillmann_localized_2025}
T.~Hillmann, L.~Berent, A.~O. Quintavalle, J.~Eisert, R.~Wille, and J.~Roffe, ``\BIBforeignlanguage{en}{Localized statistics decoding for quantum low-density parity-check codes},'' \emph{\BIBforeignlanguage{en}{Nat. Commun.}}, vol.~16, no.~1, p. 8214, Sep. 2025.

\bibitem{yin_symbreak_2024}
K.~Yin, X.~Fang, J.~Ruan, H.~Zhang, D.~Tullsen, A.~Sornborger, C.~Liu, A.~Li, T.~Humble, and Y.~Ding, ``{SymBreak}: Mitigating quantum degeneracy issues in {QLDPC} code decoders by breaking symmetry,'' Dec. 2024, 10.48550/arXiv.2412.02885.

\bibitem{LK21}
C.-Y. Lai and K.-Y. Kuo, ``Log-domain decoding of quantum {LDPC} codes over binary finite fields,'' \emph{IEEE Trans. Quantum Eng.}, vol.~2, pp. 1--15, Sep. 2021.

\bibitem{liu_neural_2019}
Y.-H. Liu and D.~Poulin, ``\BIBforeignlanguage{en}{Neural belief-propagation decoders for quantum error-correcting codes},'' \emph{\BIBforeignlanguage{en}{Phys. Rev. Lett.}}, vol. 122, no.~20, p. 200501, May 2019.

\bibitem{Geiselhart2022}
M.~Geiselhart, M.~Ebada, A.~Elkelesh, J.~Clausius, and S.~ten Brink, ``Automorphism ensemble decoding of quasi-cyclic {LDPC} codes by breaking graph symmetries,'' \emph{IEEE Commun. Lett.}, vol.~26, no.~8, pp. 1705--1709, Aug. 2022.

\bibitem{mandelbaum2024endomorphisms}
J.~Mandelbaum, S.~Miao, H.~J{\"a}kel, and L.~Schmalen, ``Endomorphisms of linear block codes,'' in \emph{Proc. IEEE Int. Symp. Inform. Theory (ISIT)}, Athens, Greece, Jul. 2024.

\bibitem{Hehn2010}
T.~Hehn, J.~Huber, O.~Milenkovic, and S.~Laendner, ``Multiple-bases belief-propagation decoding of high-density cyclic codes,'' \emph{IEEE Trans. Commun.}, vol.~58, no.~1, pp. 1--8, Jan. 2010.

\bibitem{MMJS24}
S.~Miao, J.~Mandelbaum, H.~Jäkel, and L.~Schmalen, ``A joint code and belief propagation decoder design for quantum {LDPC} codes,'' in \emph{Proc. IEEE Int. Symp. Inform. Theory (ISIT)}, Athens, Greece, Jul. 2024.

\bibitem{MJS25}
J.~Mandelbaum, H.~Jäkel, and L.~Schmalen, ``Subcode ensemble decoding of linear block codes,'' in \emph{Proc. IEEE Int. Symp. Inform. Theory (ISIT)}, Ann Arbor, MI, USA, Jun. 2025.

\bibitem{AED_RMcodes}
M.~Geiselhart, A.~Elkelesh, M.~Ebada, S.~Cammerer, and S.~ten Brink, ``Automorphism ensemble decoding of {Reed-Muller} codes,'' \emph{IEEE Trans. Commun.}, vol.~69, no.~10, pp. 6424--6438, Oct. 2021.

\bibitem{koutsioumpas_automorphism_2025}
S.~Koutsioumpas, H.~Sayginel, M.~Webster, and D.~E. Browne, ``Automorphism ensemble decoding of quantum {LDPC} codes,'' Mar. 2025, 10.48550/arXiv.2503.01738.

\bibitem{koutsioumpas_colour_2025}
S.~Koutsioumpas, T.~Noszko, H.~Sayginel, M.~Webster, and J.~Roffe, ``Colour codes reach surface code performance using {Vibe} decoding,'' Aug. 2025, 10.48550/arXiv.2508.15743.

\bibitem{maan2026decodingcorrelatederrorsquantum}
A.~S. Maan, F.~M. Garcia~Herrero, A.~Paler, and V.~Savin, ``Decoding correlated errors in quantum {LDPC} codes,'' \emph{Nat. Commun.}, vol.~17, p. 3965, Mar. 2026.

\bibitem{gong2024lowlatencyiterativedecodingqldpc}
A.~Gong, S.~Cammerer, and J.~M. Renes, ``Toward low-latency iterative decoding of {QLDPC} codes under circuit-level noise,'' Mar. 2024, 10.48550/arXiv.2403.18901.

\bibitem{ye2025beamsearchdecoderquantum}
\BIBentryALTinterwordspacing
M.~Ye, D.~Wecker, and N.~Delfosse, ``Beam search decoder for quantum {LDPC} codes,'' 2025. [Online]. Available: \url{https://arxiv.org/abs/2512.07057}
\BIBentrySTDinterwordspacing

\bibitem{MBtB26}
J.~Mandelbaum, P.~Bezner, S.~ten Brink, H.~Jäkel, and L.~Schmalen, ``Affine subcode ensemble decoding for linear block codes,'' Apr. 2026, 10.48550/arXiv.2604.06889.

\bibitem{MWPM2002}
E.~Dennis, A.~Kitaev, A.~Landahl, and J.~Preskill, ``Topological quantum memory,'' \emph{J. Math. Phys.}, vol.~43, no.~9, pp. 4452--4505, Sep. 2002.

\bibitem{Kitaev2003}
A.~Y. Kitaev, ``Fault-tolerant quantum computation by anyons,'' \emph{Ann. Phys.}, vol. 303, no.~1, pp. 2--30, Jan. 2003.

\bibitem{Calderbank1998}
A.~R. Calderbank, E.~M. Rains, P.~W. Shor, and N.~J.~A. Sloane, ``Quantum error correction via codes over {GF(4)},'' \emph{{IEEE} Trans. Inf. Theory}, vol.~44, no.~4, pp. 1369--1387, Jul. 1998.

\bibitem{Calderbank1996}
A.~R. Calderbank and P.~W. Shor, ``Good quantum error-correcting codes exist,'' \emph{Phys. Rev. A}, vol.~54, no.~2, pp. 1098--1105, Aug. 1996.

\bibitem{Steane1996}
A.~M. Steane, ``Multiple-particle interference and quantum error correction,'' \emph{Proc. R. Soc. Lond. A}, vol. 452, no. 1954, pp. 2551--2577, Sep. 1996.

\bibitem{Roffe2019}
J.~Roffe, ``Quantum error correction: an introductory guide,'' \emph{Contemp. Phys.}, vol.~60, no.~3, p. 226–245, Jul. 2019.

\bibitem{LK20}
K.-Y. Kuo and C.-Y. Lai, ``Refined belief propagation decoding of sparse-graph quantum codes,'' \emph{IEEE J. Sel. Areas in Inf. Theory}, vol.~1, no.~2, pp. 487--498, Aug. 2020.

\bibitem{Overcomplete2006}
S.~Laendner, T.~Hehn, O.~Milenkovic, and J.~B. Huber, ``When does one redundant parity-check equation matter?'' in \emph{Proc. Global Conf. Commun. (GLOBECOM)}, San Francisco, CA, USA, Nov. 2006.

\bibitem{Leon1988}
J.~S. Leon, ``A probabilistic algorithm for computing minimum weights of large error-correcting codes,'' \emph{IEEE Trans. Inf. Theory}, vol.~34, no.~5, pp. 1354--1359, Sep. 1988.

\bibitem{Pymatching}
O.~Higgott and C.~Gidney, ``Sparse blossom: Correcting a million errors per core second with minimum-weight matching,'' \emph{Quantum}, vol.~9, p. 1600, Jan. 2025.

\bibitem{fowler2013optimalcomplexitycorrectioncorrelated}
A.~G. Fowler, ``Optimal complexity correction of correlated errors in the surface code,'' 2013, 10.48550/arXiv.1310.0863.

\end{thebibliography}
\end{document}